\newtheorem{lemma}{Lemma}
\newtheorem{proof}{Proof}
\begin{document}
%
\title{Learning Node Representations from Noisy Graph Structures}


\author{\IEEEauthorblockN{Junshan Wang\IEEEauthorrefmark{1},
Ziyao Li\IEEEauthorrefmark{2},
Qingqing Long\IEEEauthorrefmark{1}, 
Weiyu Zhang\IEEEauthorrefmark{1},
Guojie Song\IEEEauthorrefmark{1}\textsuperscript{$\S$} and
Chuan Shi\IEEEauthorrefmark{3}} \thanks{\textsuperscript{$\S$} Corresponding author. 

We are very thankful to Lun Du for his important and helpful suggestions for our model. 
This work was supported by the National Natural Science Foundation of China (Grant No. 61876006).}
\IEEEauthorblockA{\IEEEauthorrefmark{1}Key Laboratory of Machine Perception, Ministry of Education, Peking University, Beijing, China}
\IEEEauthorblockA{\IEEEauthorrefmark{2}Center for Data Science, Peking University, Beijing, China \\
Email: \{wangjunshan, leeeezy, qingqinglong, zhangdavid, gjsong\}@pku.edu.cn}
\IEEEauthorblockA{\IEEEauthorrefmark{3}School of Computer Science, Beijing University of Posts and Telecommunications, Beijing, China \\
Email: shichuan@bupt.edu.cn}}


%


\maketitle

\begin{abstract}
    Learning low-dimensional representations on graphs has proved to be effective in various downstream tasks. However, noises prevail in real-world networks, which compromise networks to a large extent in that edges in networks propagate noises through the whole network instead of only the node itself. While existing methods tend to focus on preserving structural properties, the robustness of the learned representations against noises is generally ignored. 

    In this paper, we propose a novel framework to learn noise-free node representations and eliminate noises simultaneously. Since noises are often unknown on real graphs, we design two generators, namely a graph generator and a noise generator, to identify normal structures and noises in an unsupervised setting. On the one hand, the graph generator serves as a unified scheme to incorporate any useful graph prior knowledge to generate normal structures. We illustrate the generative process with community structures and power-law degree distributions as examples. On the other hand, the noise generator generates graph noises not only satisfying some fundamental properties but also in an adaptive way. Thus, real noises with arbitrary distributions can be handled successfully. Finally, in order to eliminate noises and obtain noise-free node representations, two generators need to be optimized jointly, and through maximum likelihood estimation, we equivalently convert the model into imposing different regularization constraints on the true graph and noises respectively. Our model is evaluated on both real-world and synthetic data. It outperforms other strong baselines for node classification and graph reconstruction tasks, demonstrating its ability to eliminate graph noises. 
\end{abstract}

\begin{IEEEkeywords}
network embedding; robustness;
\end{IEEEkeywords}

%
\IEEEpeerreviewmaketitle

\section{Introduction}

{\bf Node representation learning} (NRL) (or Network Embedding) learns low-dimensional representations of nodes on complex graphs. 
While traditional models like DeepWalk \cite{perozzi2014deepwalk}, LINE \cite{tang2015line}, GCN \cite{kipf2016semi} and GraphSAGE \cite{hamilton2017inductive} prove to be effective on node representation tasks, a common drawback of previous works is that noises on graphs are rarely stressed. 

In the real world, all data may be contaminated by noises, and networks are no exception. For example, spammers connect to nodes without distinction on social networks, leading to numerous edge noises. The process of collecting network data may also suffer from noises such as missing edges. 
Noises will propagate through the entire graphs so that representations of any nodes may be affected. Meanwhile, due to the complex nature of noises, it is hard to exactly identify noises and hence adopt supervised methods to learn robust representations. 
However, neglecting noises may lead to inferior representations and unreliable predictions in downstream applications. For example, the high connectivity of spammers could cheat the recommendation system. 

Indeed, noise detection algorithm \cite{akoglu2015graph,akoglu2010oddball} can be leveraged but they are generally designed for specific patterns and cannot generalize to flexible notions of noises. 
Besides, there exist some approaches to enhance the robustness of representations on graphs. A great portion of them adopt adversarial training methods to encourage the global smoothness of the embedding distribution such as AIDW \cite{dai2018adversarial}, ARGA \cite{pan2018adversarially}, GraphGAN \cite{wang2018graphgan} and AdvT \cite{dai2019adversarial}. The rest of them handle the uncertainty of graph structures with the help of Bayesian methods \cite{zhang2019bayesian} or distributions \cite{zhu2019robust}. 
However, since all of them work in vector space, the comprehensibility of real noises is lacking, and hence prior knowledge cannot be incorporated to help detect noises. It leaves the necessity for a more explainable and flexible network embedding framework that handles noises in network space where properties of network structures and noises can be modeled. 

Recently, generative models have gained increasing popularity in NRL and we resort for them to address the problem of noise-handling network embedding for two reasons. On one hand, noises are generally unseen, which makes it hard for discriminative models to handle. But generative models can be designed to reconstruct the pristine graph and noises separately, thereby eliminating noises from the graph. On the other hand, prior knowledge of networks sheds light on how the networks tend to be, which can be easily incorporated in generative models. Yet the following challenges still linger:
\textbf{(1). Diverse and complex graph prior knowledge.} Incorporating prior knowledge helps to achieve a more reasonable generative process so that noises can be identified better. However, a wide range of complex prior knowledge about real networks prevails, imposing challenges on designing a universal scheme capable of integrating all of them. 
\textbf{(2). Noises with arbitrary distributions.} There are numerous types of noises on graphs, none of which possess clear definitions, making it hard to design a fixed process for real noises. Alternatively, adaptive processes for noises should be designed to handle noises with arbitrary distributions.  

In this paper, we propose a generative framework to learn noise-free node representations on graphs in an unsupervised setting. Two generative processes are designed for normal structures and noise edges respectively. 
Specifically, the graph generator is a universal scheme that is capable of incorporating any useful graph prior knowledge to facilitate the generation of normal structures.  
The noise generator is based on some fundamental properties of noises on real graphs and can eliminate noises with arbitrary distributions in an adaptive generative way.
We then propose a joint learning algorithm to optimize two generators via maximum likelihood estimation with the real graph as observations. Our model is equivalently converted into imposing specific regularization terms on the normal structures and noises. 
We evaluate our model on different datasets. Experimental results show that our model can not only obtain better performance for node classification, but also reconstruct the ground-truth graph and noises. 

To summarize, we make the following contributions.
\begin{itemize}
    \item We propose a robust network embedding model capable of eliminating noises in an unsupervised manner. Distinct generative processes for the graph and noises are designed and optimized jointly. 
    \item We design a universal scheme to incorporate different prior knowledge of graphs, and propose an adaptive method to handle noises with arbitrary distributions. 
    \item We conduct experiments for several tasks and results show that our model has the best performance of learning robust node representations.
\end{itemize}

\section{Denoising Network Embedding Model}

\begin{figure*}[h]
\centering
\includegraphics[width=1.0\textwidth]{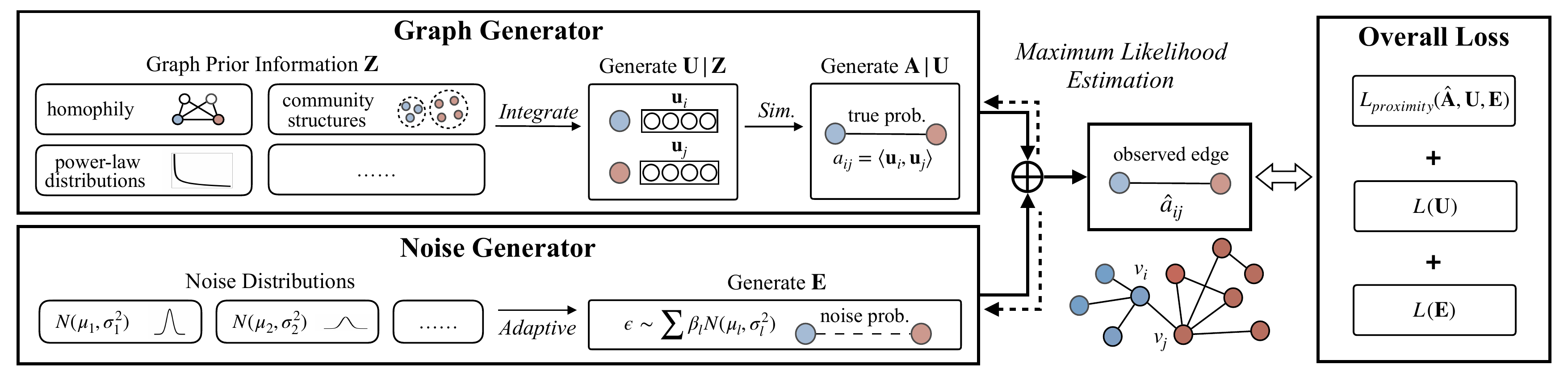}
\caption{The framework of our proposed Denoising Network Embedding Model. }
\label{fig:framework}
\end{figure*}

\subsection{Overview}
Let $\hat{G} = (V, E)$ be an undirected and unweighted graph with $n$ nodes observed from the real data. Let $\mathbf{\hat{A}}=\{\hat{a}_{ij}\}_{n \times n} \in \{0,1\}_{n \times n}$ be its adjacency matrix, where $\hat{a}_{ij} = 1$ if nodes $v_i$ and $v_j$ are linked. We assume that the observed graph is not ideal and may be corrupted by noises, which act upon the observed adjacency matrix $\hat{\mathbf{A}}$. 
We denote $\mathbf{A}=\{a_{ij}\}_{n \times n}$ as the adjacency matrix of the true graph and $\mathbf{E}=\{\epsilon_{ij}\}_{n \times n}$ as noises on the graph which act upon $a_{ij}$ as perturbations. 
Formally, the relationship among the observed graph $\mathbf{\hat{A}}$, true graph $\mathbf{A}$ and noises $\mathbf{E}$ is
\begin{equation}
    \mathbf{\hat{A}} = \mathbf{A} + \mathbf{E}.
\label{equ:sum}
\end{equation}
Notice that in this paper, we discuss the most fundamental case where noises refer to perturbations on edges. It can be generalized to more universal scenarios, such as directed and weighted graphs by following \cite{tang2015line}. And we leave other sorts of noises like noises on nodes or attributes for future work. 


In this work, we propose an NRL model to learn robust node representations and eliminate noises simultaneously as shown in Figure \ref{fig:framework}. Unlike traditional methods that learn node representations $\mathbf{U}=\{\mathbf{u}_i \in \mathbb{R}^m\}_n$ on $\hat{\mathbf{A}}$ corrupted by noises $\mathbf{E}$, our proposed model learns representations $\mathbf{U}$ that reflect $\mathbf{A}$. 
As both true data and noises on graphs are unknown, an unsupervised approach is required. We propose a generative framework consisting of a \textbf{Graph Generator} and a \textbf{Noise Generator}, which are designed according to the distinct patterns of normal edges and noise edges and will be introduced in Section \ref{sec:graph_generator} and Section \ref{sec:noise_generator} respectively. 
Then we propose a \textbf{Joint Learning Algorithm} to optimize two generators in Section \ref{sec:learning}, such that estimates of $\mathbf{A}, \mathbf{U}, \mathbf{E}$ can be obtained with good flexibility. 

\subsection{Graph Generator with Diverse Priors}
\label{sec:graph_generator}


Real-world graphs generally possess heuristic knowledge, such as homophily, community structures, hierarchical structures, degree distributions, etc. Incorporating these knowledge can contribute to a better generative process and further help eliminate noises. However, it is not easy to integrate different prior knowledge into the same generation scheme. 
In this section, we first introduce the basic scheme of the graph generator. The key of the graph generator is the generation strategies for representations $\mathbf{U}$ and the probabilities $\mathbf{A}$. 
Then our graph generator can be extended to integrate any useful prior knowledge according to the specific properties of a graph, providing stronger denoising ability and more flexibility for the model. We take two prevailing prior knowledge as examples.

We first introduce the basic idea of the graph generator. Formation of normal edges obeys some universal patterns and we use one of the most straightforward intuitions of homophily, i.e. two nodes are likely to be connected if they are ``similar" to each other. We draw support from node representations $\mathbf{U}$, in that they represent properties of nodes. For node $v_i$, we propose that its representation follows the Gaussian distribution $\mathbf{u}_i \sim \mathcal{N}(\mathbf{0}, \sigma_u^2 \mathbf{I})$, where $\sigma_u$ is a hyper-parameter. 
Given the generated representations of nodes, we believe that those nodes whose representations are close are inclined to draw an edge as
\begin{equation}
   a_{ij} = \sigma(\langle \mathbf{u}_i, \mathbf{u}_j \rangle),
\label{equ:prob}
\end{equation}
where $\langle \mathbf{x}, \mathbf{y} \rangle = - \| \mathbf{x} - \mathbf{y} \|^2$ denote the proximity between two vectors and $\sigma(x)$ is the sigmoid function.

Notice that $\mathbf{u}_i$ can be generated based on more complex prior distributions and hence the generation of $a_{ij}$ may also be adjusted. It will be discussed with two examples in the following, and furthermore, more prior knowledge of graphs can be integrated by simply following these examples.

\subsubsection{Community structures}

Community structures are common in the real world, which provide insight towards how the network is formed in that intra-community links are more likely than inter-community links. By contrast, such pattern is generally not observed for noises, and hence, prior knowledge on community structures facilitates generation of true graphs and distinguishes them from noises.

Suppose that there are $K$ communities in graph $\mathbf{A}$ and each node $v_i$ belongs to a community. 
Each community in graph $\mathbf{A}$ has a representation $\mathbf{c}_k$ that follows a Gaussian distribution $\mathcal{N}(\mathbf{0}, \sigma_w^2 \mathbf{I})$. For each node $v_i$, the first $m_1$ dimensions of its representations denote its local structure while the last part with $m_2=(m-m_1)$ dimensions denote its community information. Accordingly, the first $d$ dimensions of its representations follow the Gaussian distribution as previously stated, while the rest are generated from community representations by a Gaussian mixture formation $\mathbf{u}_{i, m_1:m} \sim \sum_k \gamma_{ik} \mathcal{N}(\mathbf{c}_k, \sigma_c^2 \mathbf{I})$, where $\sigma_c$ is hyper-parameter and $\gamma_{ik}$ denotes community membership that satisfies $\sum_k \gamma_{ik} = 1$.

When generating a normal edge $a_{ij}$, not only the local structure but also the community information is considered. Two nodes are inclined to form a link if they are in the same community. We hence calculate $a_{ij}$ in Equation \ref{equ:prob} as
\begin{equation}
a_{ij} = \sigma( \langle \mathbf{u}_{i,0:m_1}, \mathbf{u}_{j,0:m_1} \rangle + \langle \mathbf{u}_{i, m_1:m}, \mathbf{u}_{j,m_1:m} \rangle ).
\end{equation}

\subsubsection{Power-law}

Node degrees follow a power-law distribution on many real-world graphs, where only a few nodes possess high degrees while most nodes have low degrees. Degree distributions can also be used to distinguish noises from the truth. For example, when spammers connect to users, the degree distribution of the network may be warped. 

Considering the formation of normal edges, nodes tend to form an edge with nodes with a high degree, which enables us to generate graphs following power-law degree distributions, hence generating more realistic true graphs against noises. 
We assign weights $\mathbf{D} = \{d_i\}_n$ to nodes to represent the ``fitness" of nodes to win edges, which can be reflected by the degree of a node $v_i$. 
We assume that $d_i$ follow weighted exponential distribution $d \sim E(\lambda)$, where $\lambda$ is a hyper-parameter.

Consequently, the probability of an edge to be generated in the true graph $\mathbf{A}$ is defined according to the similarity between two nodes and the importance of them. Nodes with greater ``fitness" tend to generate edges, and hence we rewrite $a_{ij}$ in Equation \ref{equ:prob} as 
\begin{equation}
    a_{ij} = \sigma(\langle  \mathbf{u}_i, \mathbf{u}_j \rangle + d_i d_j).
\end{equation}

\subsection{Noise Generator with Adaptive Distributions}
\label{sec:noise_generator}

Generation of noises on graphs has specific patterns and is distinguishable from the generation of normal edges. 
On one hand, the fundamental properties of noises are randomness, independence and sparsity. 
On the other hand, real-world noises on graphs may follow different distributions, which are difficult to obtain in advance. 
In this section, we design the noise generator that follows the noise properties and achieves good adaptability so that it can adapt to real-world noises with richer notions.

We first consider a simple situation where edge noises follow zero-mean Gaussian distribution $\epsilon_{ij} \sim \mathcal{N}(0, \sigma_{\epsilon}^2)$,
where $\sigma_{\epsilon}$ is a hyper-parameter. The noises on most of edges are close to zero, that is, the noises hardly affect these edges. Only a small part of edges are heavily noisy. A large positive noise indicates that the noise is likely to cause an abnormal edge and a negative noise indicates that the noise causes the edge that should have appeared to be missing.

The na\'ive prior distribution mentioned before should be significantly improved to learn real noises with more complex distributions. We use Gaussian distribution as an example and other distributions like the Laplace distribution can be used similarly. Given a series of Gaussian distributions $\mathcal{N}(\mu_l, \sigma_l^2), l=1,..,L$, we assume that real noises follow one of these distributions. We let noises be a linear combination of these distributions
\begin{equation}
\label{equ:noise_multiple}
    \epsilon_{ij} \sim  \sum_l \beta_l \mathcal{N}(\mu_l, \sigma_l^2),
\end{equation}
where $\beta_l$, $\mu_l$ and $\sigma_l^2$ denote the weight, mean and variance of the $l$-th prior distribution. $\mu_l$ and $\sigma_l^2$ are hyper-parameters and can be interpreted as the strength of noises. $\beta_l$ are trainable parameters which are normalized using a softmax layer such that $\sum_l \beta_l = 1$, providing adaptability to handle different noises in real-world graphs.

\subsection{Joint Learning of Two Generators}
\label{sec:learning}

To learn node representations and eliminate noises simultaneously, we derive the joint optimization process of two generators, and present the optimization goals given different prior knowledge, providing solutions for more extensions.

Node representations $\mathbf{U}$ are generated firstly, followed by the generation of the true graph $\mathbf{A}$, which is then combined with noises $\mathbf{E}$ generated independently. In practice, to guarantee $a_{ij} + \epsilon_{ij}$ is still a probability, we truncate its value in $[0,1]$. Finally, our observation $\mathbf{\hat{A}}$ is generated based on the truncated combination of true graph and noises. 
As the observed graph $\hat{\mathbf{A}}$ is binary, with $\hat{a}_{ij} = 1$ if $v_i$ and $v_j$ are connected, we assume that observations on edges $\hat{a}_{ij}$ follow Bernoulli distribution to bridge such gap between continuous and discrete variables: $\hat{a}_{ij}  \sim Bernoulli(a_{ij} + \epsilon_{ij})$. 
Hence, we derive the likelihood function of the whole process as 
\begin{equation}
\begin{aligned}
    P(\mathbf{\hat{A}}|\mathbf{A},\mathbf{E}) P(\mathbf{E}) P(\mathbf{A}|\mathbf{U}) P(\mathbf{U} | \mathbf{Z}) P(\mathbf{Z}),
\end{aligned}
\label{equ:likelihood}
\end{equation}
where $P(\mathbf{A}|\mathbf{U}) P(\mathbf{U} | \mathbf{Z}) P(\mathbf{Z})$ and $P(\mathbf{E})$ correspond to the graph generator and the noise generator respectively. $\mathbf{Z}$ denotes prior knowledge on node representations $\mathbf{U}$. We find that when taking log-likelihood, Equation \ref{equ:likelihood} resembles the objective functions of SkipGram models with regularization terms, hence facilitating the optimization of our model. 

\begin{lemma}
    \label{lem:1}
    Equation \ref{equ:likelihood} is equivalent to the overall loss as:
    \begin{equation}
    \begin{aligned}
        \mathcal{L} = \mathcal{L}_{proximity}(\mathbf{\hat{A}},\mathbf{U},\mathbf{E}) +\alpha_E \mathcal{L}(\mathbf{E}) + \alpha_U \mathcal{L}(\mathbf{U}),
    \end{aligned}
    \end{equation}
    where $\mathcal{L}_{proximity}(\mathbf{\hat{A}},\mathbf{U},\mathbf{E})$ is a modification of the structure-preserving objective with noises added. Regularization terms $\mathcal{L}(\mathbf{E})$ and $\mathcal{L}(\mathbf{U})$ are imposed to constrain noises and node representations respectively with $\alpha_E$ and $\alpha_U$ as their weights. 
\end{lemma}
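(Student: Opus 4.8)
The plan is to compute the negative logarithm of the likelihood in Equation \ref{equ:likelihood} and show that, after discarding additive constants, its five factors regroup into exactly the three summands of $\mathcal{L}$. Setting $\mathcal{L} := -\log\big[P(\mathbf{\hat{A}}|\mathbf{A},\mathbf{E})\,P(\mathbf{E})\,P(\mathbf{A}|\mathbf{U})\,P(\mathbf{U}|\mathbf{Z})\,P(\mathbf{Z})\big]$, the logarithm turns the product into a sum, so it suffices to analyze each factor separately and then collect the pieces.

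First I would handle the observation term. Since $\hat a_{ij}\sim\mathrm{Bernoulli}(a_{ij}+\epsilon_{ij})$ with the $\hat a_{ij}$ conditionally independent, $-\log P(\mathbf{\hat{A}}|\mathbf{A},\mathbf{E}) = -\sum_{i,j}\big[\hat a_{ij}\log(a_{ij}+\epsilon_{ij}) + (1-\hat a_{ij})\log(1-a_{ij}-\epsilon_{ij})\big]$. Because $a_{ij}=\sigma(\langle\mathbf{u}_i,\mathbf{u}_j\rangle)$ is a deterministic function of $\mathbf{U}$ (Equation \ref{equ:prob}, or its community / power-law variants), this quantity depends only on $\mathbf{\hat{A}},\mathbf{U},\mathbf{E}$; I would identify it with $\mathcal{L}_{proximity}(\mathbf{\hat{A}},\mathbf{U},\mathbf{E})$ and observe that when $\mathbf{E}=\mathbf{0}$, and one replaces the exact partition function by negative sampling, it reduces to the usual SkipGram-with-negative-sampling structure-preserving objective — so with $\mathbf{E}$ present it is the claimed ``modification of the structure-preserving objective with noises added.'' The factor $P(\mathbf{A}|\mathbf{U})$ is a degenerate point-mass distribution by the same determinism, hence its log-contribution is constant on the feasible set and drops out.

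Next I would expand the two prior factors. For the noise generator, in the single-Gaussian case $-\log P(\mathbf{E}) = \frac{1}{2\sigma_\epsilon^2}\sum_{i,j}\epsilon_{ij}^2 + \mathrm{const}$, giving $\mathcal{L}(\mathbf{E}) = \|\mathbf{E}\|_F^2$ and $\alpha_E = 1/(2\sigma_\epsilon^2)$; under the mixture model of Equation \ref{equ:noise_multiple} it becomes $-\sum_{i,j}\log\sum_l \beta_l\,\mathcal{N}(\epsilon_{ij};\mu_l,\sigma_l^2)$, which I would take as the definition of $\alpha_E\mathcal{L}(\mathbf{E})$. For the graph generator, $-\log P(\mathbf{U}|\mathbf{Z}) - \log P(\mathbf{Z})$ unrolls into the Gaussian log-density of $\mathbf{u}_i\sim\mathcal{N}(\mathbf{0},\sigma_u^2\mathbf{I})$, contributing $\frac{1}{2\sigma_u^2}\sum_i\|\mathbf{u}_i\|^2$, plus — under the community or power-law priors — the corresponding Gaussian-mixture term coupling $\mathbf{u}_{i,m_1:m}$ to the community representations $\mathbf{c}_k$, or the exponential-prior term for the fitness weights $d_i$. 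Folding all of these (and their hyper-parameter coefficients) into $\alpha_U\mathcal{L}(\mathbf{U})$ and discarding the accumulated constants yields the stated identity $\mathcal{L} = \mathcal{L}_{proximity}(\mathbf{\hat{A}},\mathbf{U},\mathbf{E}) + \alpha_E\mathcal{L}(\mathbf{E}) + \alpha_U\mathcal{L}(\mathbf{U})$.

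The main obstacle I anticipate is not the bookkeeping but making the word ``equivalent'' precise where the model is degenerate: $P(\mathbf{A}|\mathbf{U})$ is a Dirac mass, and $a_{ij}+\epsilon_{ij}$ is truncated to $[0,1]$ before it parametrizes the Bernoulli, so the naive density manipulations must be read as optimization over the feasible region $\{0\le a_{ij}+\epsilon_{ij}\le 1\}$ with the point-mass factors contributing only constants. I would also be careful to state the SkipGram connection at the level the paper actually needs — i.e.\ up to the standard negative-sampling approximation of the normalization — rather than as an exact equality.
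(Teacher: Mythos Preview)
Your proposal is correct and follows essentially the same approach as the paper: take the negative log of the likelihood, split the product into a sum, identify the Bernoulli cross-entropy term as $\mathcal{L}_{proximity}$ (with the negative-sampling approximation), and read off the Gaussian priors on $\mathbf{E}$ and $\mathbf{U}$ as quadratic regularizers with coefficients $\alpha_E=1/(2\sigma_\epsilon^2)$ and $\alpha_U=1/(2\sigma_u^2)$. The paper's own proof treats only the na\"ive case (homophily prior, single-Gaussian noise) explicitly and defers the extensions, whereas you sketch all variants and are more careful about the degeneracy of $P(\mathbf{A}|\mathbf{U})$ and the truncation --- but the underlying argument is the same.
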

\begin{proof}
    We prove the lemma on the most na\'ive situation, where only the homophily of graph prior knowledge and the noises with Gaussian distribution are considered. Learning procedures for more extensions will be presented later. We adopt Maximum Likelihood Estimation (MLE) and derive negative log-likelihood function as
    \begin{equation}
        \begin{aligned}
            & - \log P(\mathbf{\hat{A}}|\mathbf{A},\mathbf{E}) P(\mathbf{E}) P(\mathbf{A}|\mathbf{U})  P(\mathbf{U} | \mathbf{Z}) P(\mathbf{Z}) \\
            = & - \log \prod_{i,j \in V} [(a_{ij} + \epsilon_{ij})^{\hat{a}_{ij}} \times (1-a_{ij} - \epsilon_{ij})^{1 - \hat{a}_{ij}} \\
            & \times \phi_{\mu_{\epsilon}, \sigma_{\epsilon}}(\epsilon_{ij}) \times \phi_{0, \sigma_u}(\mathbf{u}_i) \times \phi_{0, \sigma_u}(\mathbf{u}_j)] \\
            = & - \sum_{(i,j) \in E} \log(\sigma(\langle \mathbf{u}_i, \mathbf{u}_j \rangle) + \epsilon_{ij}) \\
            & - k \cdot E_{v_n \sim P_n(v)}[\log(1-\sigma(\langle \mathbf{u}_i, \mathbf{u}_j \rangle) - \epsilon_{ij})] \\
            & + \alpha_E \sum_{i,j \in V} \epsilon_{ij}^2 + \alpha_U \sum_{i \in V} \sum_r u_{i,r}^2 \\
            = & \mathcal{L}_{proximity}(\mathbf{\hat{A}},\mathbf{U},\mathbf{E}) +\alpha_E \mathcal{L}(\mathbf{E}) + \alpha_U \mathcal{L}(\mathbf{U}),
        \end{aligned}
        \label{equ:loss}
        \end{equation}
    where $\phi_{\mu, \sigma}(x) = \frac{1}{\sqrt{2 \pi} \sigma} \exp(- \frac{(x-\mu)^2}{2\sigma^2})$ is the density function for Gaussian distributions and $C$ is a constant. And we use negative sampling to obtain negative samples. 
\end{proof}

The structural-preserving term $\mathcal{L}_{proximity}$ preserves local structure of nodes. In practice, we implement $\mathcal{L}_{proximity}$ using the strategy in DeepWalk, letting $\hat{a}_{ij}$ be a positive sample if $v_j$ is the context of $v_i$ in a random walk. 
$\mathcal{L}(\mathbf{E})$ is imposed to constrain the noises to eliminate the influence of noises from node representations. 
Another regularization term $\mathcal{L}(\mathbf{U})$ is imposed to regularize representations and also enables the model to learn robust representations. 

\subsubsection{Learning of Diverse Graph Prior Knowledge}

Given the community prior knowledge, $\mathcal{L}(\mathbf{U})$ can be replaced as 
\begin{equation}
\begin{aligned}
    \mathcal{L}_{com}(\mathbf{U}) = & \sum_{i \in V} \sum_{r < m_1} u_{i,r}^2 +  \sum_{i \in V} \sum_{k} \frac{\gamma_{ik}}{2 \sigma_c^2} \sum_{r \ge m_1} (u_{i,r}-c_{k,r})^2.
\end{aligned}
\label{equ:community}
\end{equation}
Given the power-law distribution of node degrees, we have 
\begin{equation}
\mathcal{L}_{deg}(\mathbf{U}) = \sum_{i \in V} \lambda d_i.
\end{equation}
Furthermore, different graph priors can be combined flexibly by weighted summation of each regularization term $\mathcal{L}_z(\mathbf{U})$.

\subsubsection{Learning of Adaptive Noise Distributions}

When the noises are generated according to Equation \ref{equ:noise_multiple}, we have
\begin{equation}
\begin{aligned}
     \mathcal{L}_{adap}(\mathbf{E}) = \sum_{i,j \in V} \sum_l \frac{\beta_l}{2 \sigma_l^2} (\epsilon_{ij} - \mu_l)^2.
\end{aligned}
\end{equation}

\begin{table*}[htbp]
    \centering
    \caption{Macro-F1 of Node Classification on Real-World Data.}
    \begin{tabular}{c|c|ccc|ccc|ccc|ccc}
    \toprule
    \multicolumn{2}{c|}{Dataset} & \multicolumn{3}{c|}{Cora} & \multicolumn{3}{c|}{Email} & \multicolumn{3}{c|}{Facebook} & \multicolumn{3}{c}{Blog Catalog} \\ 
    \midrule
    \multicolumn{2}{c|}{\% Training Data} & 30\% & 50\% & 70\% & 30\% & 50\% & 70\% & 30\% & 50\% & 70\% & 30\% & 50\% & 70\% \\
    \midrule
    \multirow{5}{*}{\shortstack{Traditional NRL Models}} & DeepWalk & 	78.12	&	79.74	&	80.72	&	42.55	&	48.89	&	54.23 &	39.82	&	40.97	&	41.04 &	19.95	&	21.04	&	21.57	\\
    & Node2Vec & 	78.56	&	80.02	&	80.70	&	45.48	&	50.66	&	56.07   &	43.97	&	44.20	&	44.38	&	19.89	&	21.08	&	21.72 \\
    & LINE &	53.15	&	55.88	&	57.14	&	27.34	&	34.05	&	40.08 &	38.32	&	38.70	&	39.21	&	17.94	&	19.36	&	20.19	\\
    & GCN   &   68.57	&	69.15	&	69.87	&	34.09	&	37.22	&	38.97	&	29.75	&	29.97	&	30.57	&	3.06	&	3.15	&	3.11    \\
    & GraphSAGE &  76.11	&	76.97	&	77.06	&	26.04	&	30.18	&	30.39	&	25.25	&	27.28	&	30.34	&	3.27	&	3.31	&	3.15	\\
    \midrule
    \multirow{3}{*}{\shortstack{Step-Wise Methods}} & Jaccard-DW & 79.97	&	81.23	&	81.13	&	39.84	&	44.19	&	50.92	&	38.82	&	42.31	&	39.77	&	18.38	&	19.12	&	19.62 \\
    & Oddball-DW & 78.73	&	80.86	&	80.51	&	41.89	&	47.33	&	53.80	&	39.56	&	40.33	&	40.66	&	20.12	&	21.36	&	21.94 \\
    & DW-PCA & 78.27	&	79.56	&	79.05	&	47.14	&	49.88	&	54.42	&	35.85	&	36.49	&	36.49	&	19.44	&	20.23	&	20.64\\
    \midrule
    \multirow{4}{*}{\shortstack{Robust NRL Models}} & AIDW &	78.48	&	80.45	&	81.03	&	48.29	&	52.83	&	57.08 &	38.67	&	38.81	&	38.84	&	20.20	&	21.65	&	22.52	\\
    & ARGA & 75.99	&	77.60	&	77.74	&	35.57	&	40.62	&	45.43	&	32.67	&	33.22	&	32.95	&	4.65	&	4.88	&	4.87 \\
    & BGCN &  70.05   &   70.27   &   71.39   &   33.45	&	37.53	&	41.88 &   34.30   &   34.51   &   37.47   &   4.48    &   4.96    &   5.11 \\
    \cmidrule{2-14}
    & DenNE & \textbf{80.18}	&	\textbf{81.60}	&	\textbf{82.44}	&	\textbf{50.42}	&	\textbf{55.07}	&	\textbf{59.56} &	\textbf{49.23}	&	\textbf{49.76}	&	\textbf{50.21} &	\textbf{24.13}	&	\textbf{25.22}	&	\textbf{25.74}	\\
    \bottomrule
    \end{tabular}
    \label{tab:nc_result_macro}
\end{table*}
        
\section{Experiments}

In this section, we evaluate our model on six datasets against eight state-of-the-art approaches for several tasks.

\subsection{Experimental Setup}

\subsubsection{Datasets} 
\textit{Cora}\footnote{Available at https://linqs.soe.ucsc.edu/data.} is a paper citation network with 2,708 nodes, 5,278 edges and 7 labels. \textit{Email}\footnote{Available at http://snap.stanford.edu/data.} is a social network with 1,005 nodes, 25,571 edges and 42 labels. \textit{Facebook}\footnote{Available at http://networkrepository.com/socfb.php.} is a social network with 9,414 nodes, 425,639 edges, 23 groups and 3 labels. \textit{Blog Catalog}\footnote{Available at http://networkrepository.com/soc-BlogCatalog.php.} is a social network with 10,312 nodes, 333,983 edges and 39 labels. \textit{Geometric} is a synthetic network with 1,024 nodes and 5,470 edges generated by a random geometric graph generator\footnote{Available at https://networkx.github.io.}. \textit{Partition} is a synthetic network with 1,024 nodes, 7,066 edges and 8 groups generated by a random partition graph generator\footnote{Available at https://networkx.github.io.}.

\subsubsection{Baselines}
Traditional NRL models include \textit{DeepWalk} \cite{perozzi2014deepwalk}, \textit{Node2Vec} \cite{grover2016node2vec}, \textit{LINE} \cite{tang2015line}, \textit{GCN} \cite{kipf2016semi} and \textit{GraphSAGE} \cite{hamilton2017inductive}. 
Step-wise methods include \textit{Jaccard-DW} \cite{akoglu2015graph}, \textit{OddBall-DW} \cite{akoglu2010oddball} and \textit{DW-PCA}.
Robust NRL models include \textit{AIDW} \cite{dai2018adversarial}, \textit{ARGA} \cite{pan2018adversarially} and \textit{BGCN} \cite{zhang2019bayesian}. 
Besides, \textit{DenNE} denotes our proposed model. \textit{DenNE-Basic} is the basic version without any additional prior knowledge. \textit{DenNE-Adap} is a variant that learns noises adaptively. \textit{DenNE-Com} and \textit{DenNE-Deg} are variants considering community and power-law priors.

\subsubsection{Parameter Settings}
We set the representation size for all models to 128 on real-world graphs and 32 on synthetic graphs.  
For DeepWalk, Node2Vec and AIDW, we set the walk number to 10, the walk length to 80 and the window size to 5. We set $p,q$ to $0.25,0.75$ for Node2Vec. For AIDW, the number of hidden layers for GAN is 1. 
For GNN models, we adopt the unsupervised loss and the number of layers is 2, where the size for each layer is set to 100 and 128 / 32 respectively. 128-dimension features learned from a pre-trained DeepWalk are used as node features. Other parameters of ARGA and BGCN are set to the default setting. 
In our model, we set $\alpha_N=0.001$, $\alpha_E=50.0$ and $\alpha_z=1$ for each knowledge. Other parameters are the same as DeepWalk. Besides, for DenNE-Adap, $(\mu_l, \sigma_l^2)$ are set to (0,5), (0,0.5) and (0.01, 0.5). For DenNE-Com, $\bm{\gamma}$ are available in Facebook and Partition. Community size is set to 7, 42, 39 and 8 in Cora, Email, Blog Catalog and Geometric. For DenNE-Deg, $\lambda$ is set to 2.

\subsection{Experimental Results}

\subsubsection{Node Classification}
We evaluate the robustness of our model for node classification on four real-world networks. We train a logistic regression classifier from scikit-learn package, using node representations learned by each model. On Email, Cora and Facebook, it is a multi-class classification task, while on Blog Catalog, it is a multi-label classification task. We randomly split the data into training and testing set and the results are averaged over 20 times. 

Macro-F1 are reported in Table \ref{tab:nc_result_macro}. Our model outperforms all other baselines consistently. 
DenNE achieves about 2\% gain on Cora and Email, and more significant gains on Facebook and Blog Catalog, which should be attributed to the ability of our model to learn robust representations. 
Compared with DeepWalk, the step-wise methods are unstable because information loss may occur during the step-by-step optimization process.
AIDW performs better than DeepWalk, and ARGA and BGCN perform better than GCN in most datasets. But their abilities to absorb noises are all limited. Our model has better performance on networks with inherent noises thanks to explicitly handling noises in network space and the use of prior knowledge in our model. 

\begin{figure*}[htbp]
\centering
\begin{minipage}[b]{1\linewidth}
    \subfigure[Positive Ratio on Geometric]{
        \includegraphics[width=0.23\columnwidth]{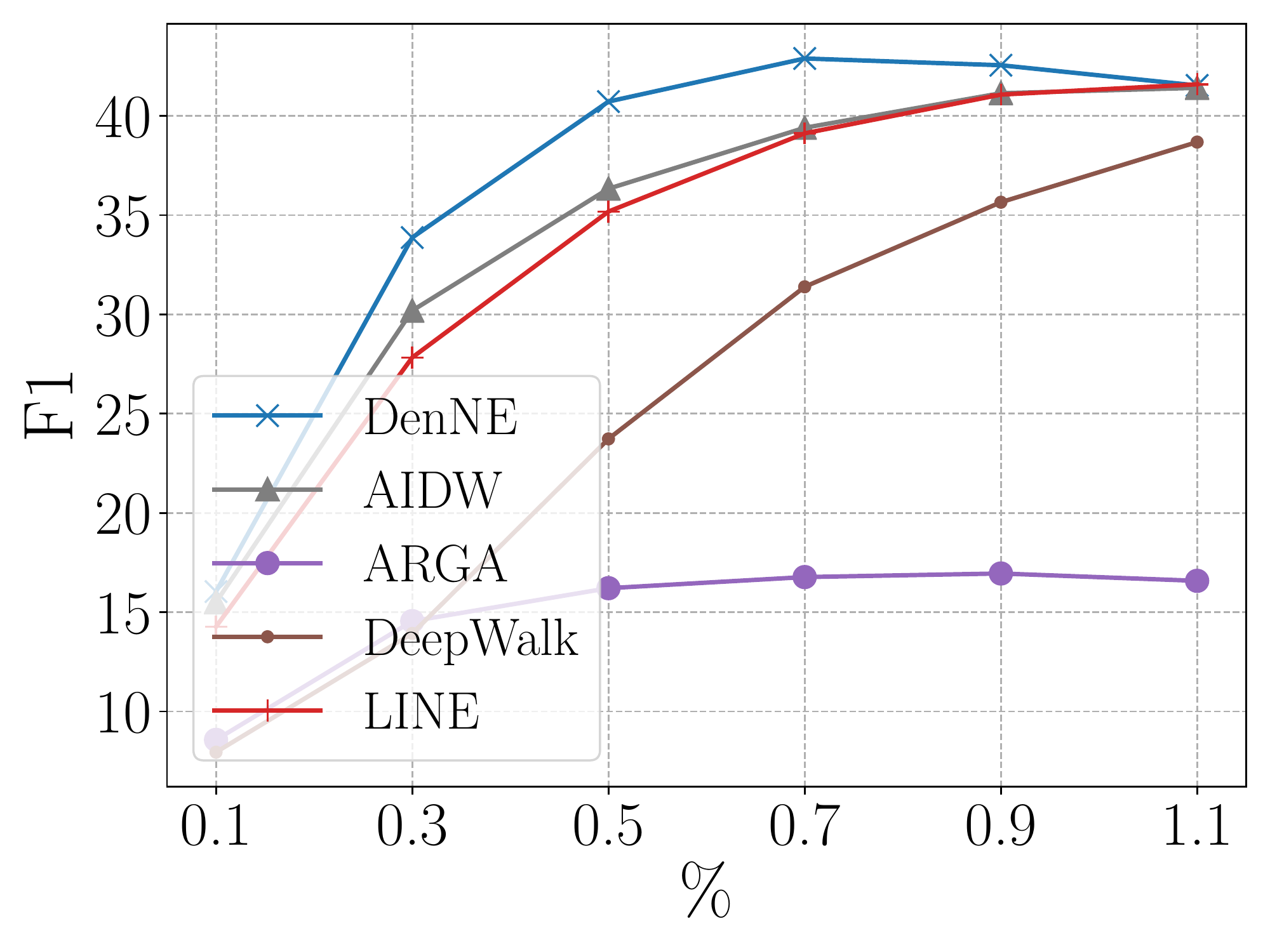} 
        \label{fig:gr_geometric}
    }
    \subfigure[Positive Ratio on Partition]{
        \includegraphics[width=0.23\columnwidth]{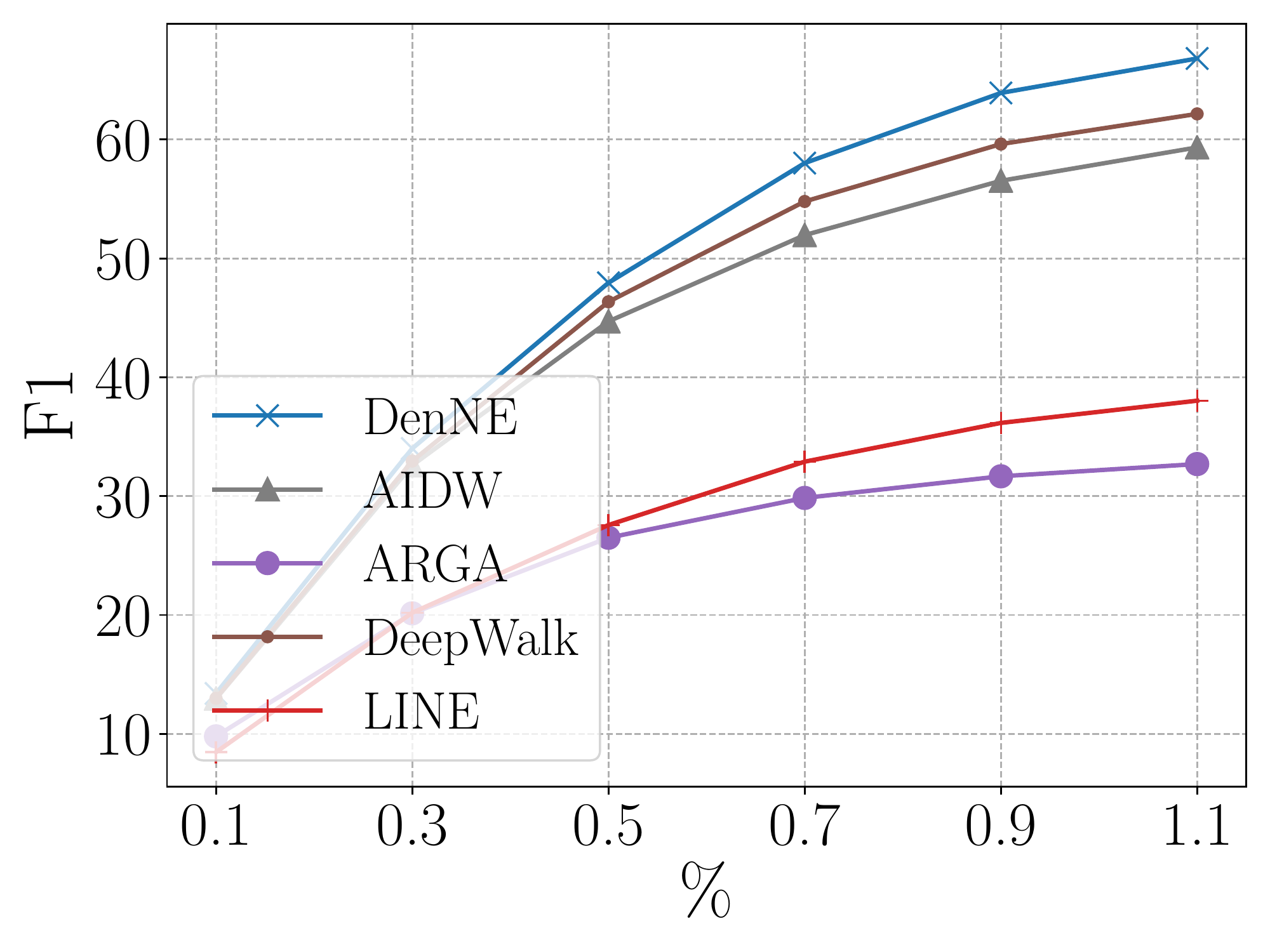} 
        \label{fig:gr_partition}
    }
    \subfigure[Noise Ratios on Partition]{
        \includegraphics[width=0.23\columnwidth]{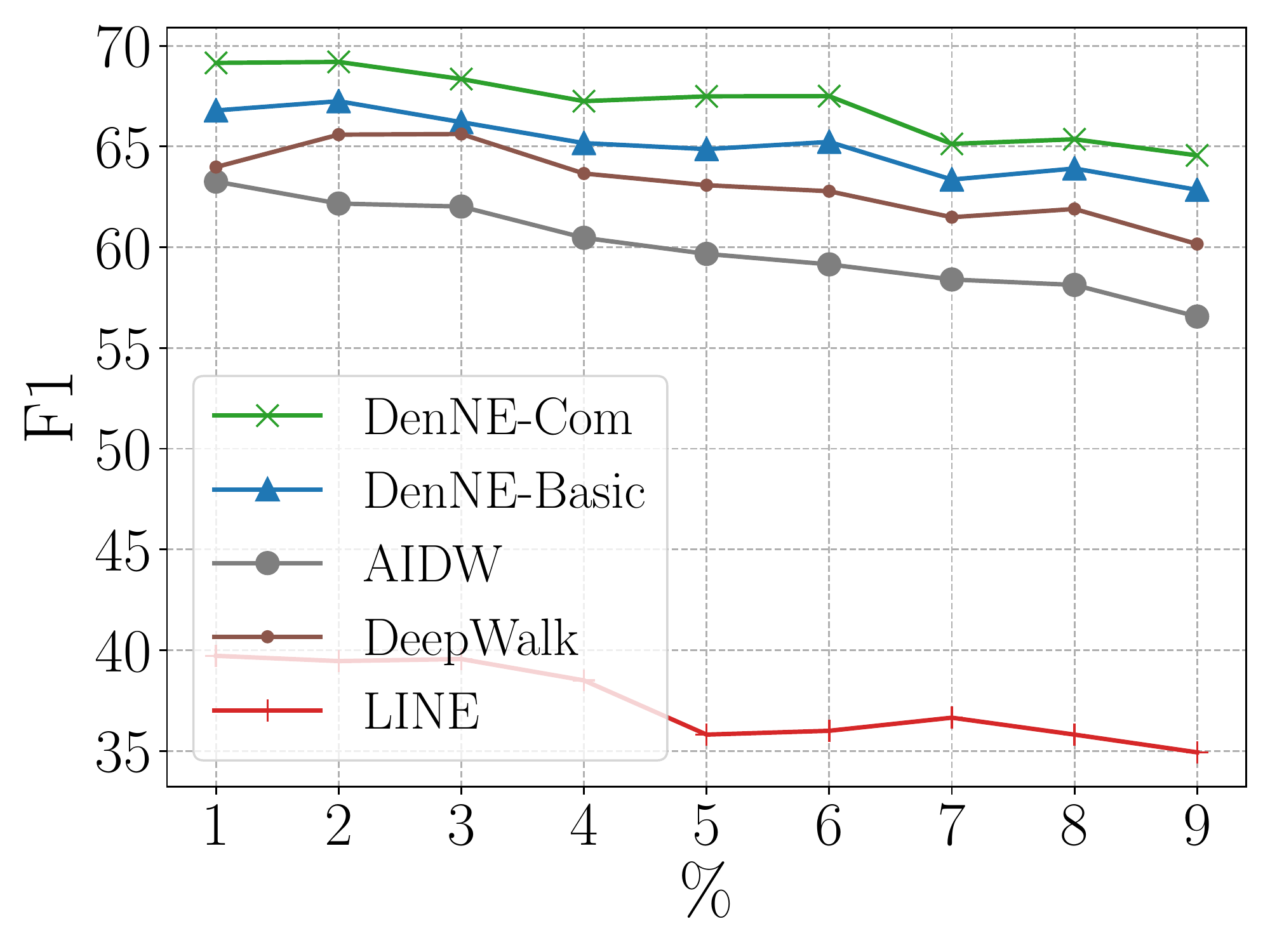} 
        \label{fig:noise_ratio}
    }
    \subfigure[Noise Distributions on Geometric]{
        \includegraphics[width=0.23\columnwidth]{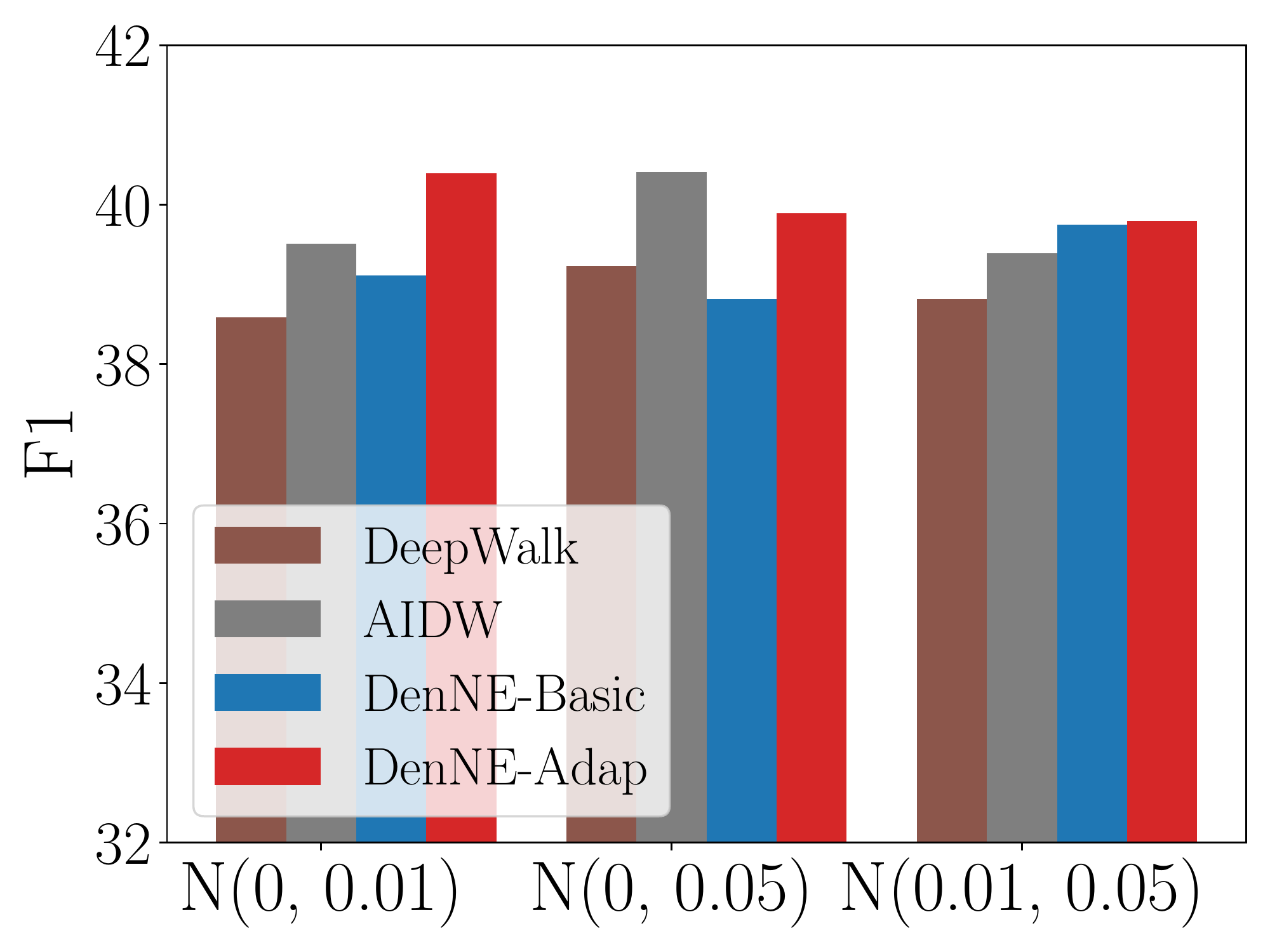} 
        \label{fig:noise_type}
    }
    \caption{Results of Graph Reconstruction on Synthetic Data.} 
    \label{fig:gr}
\end{minipage}
\end{figure*}

\begin{figure*}[htbp]
    \begin{minipage}[b]{1\linewidth}
        \subfigure[$\alpha_U$ on Cora]{
            \includegraphics[width=0.23\columnwidth]{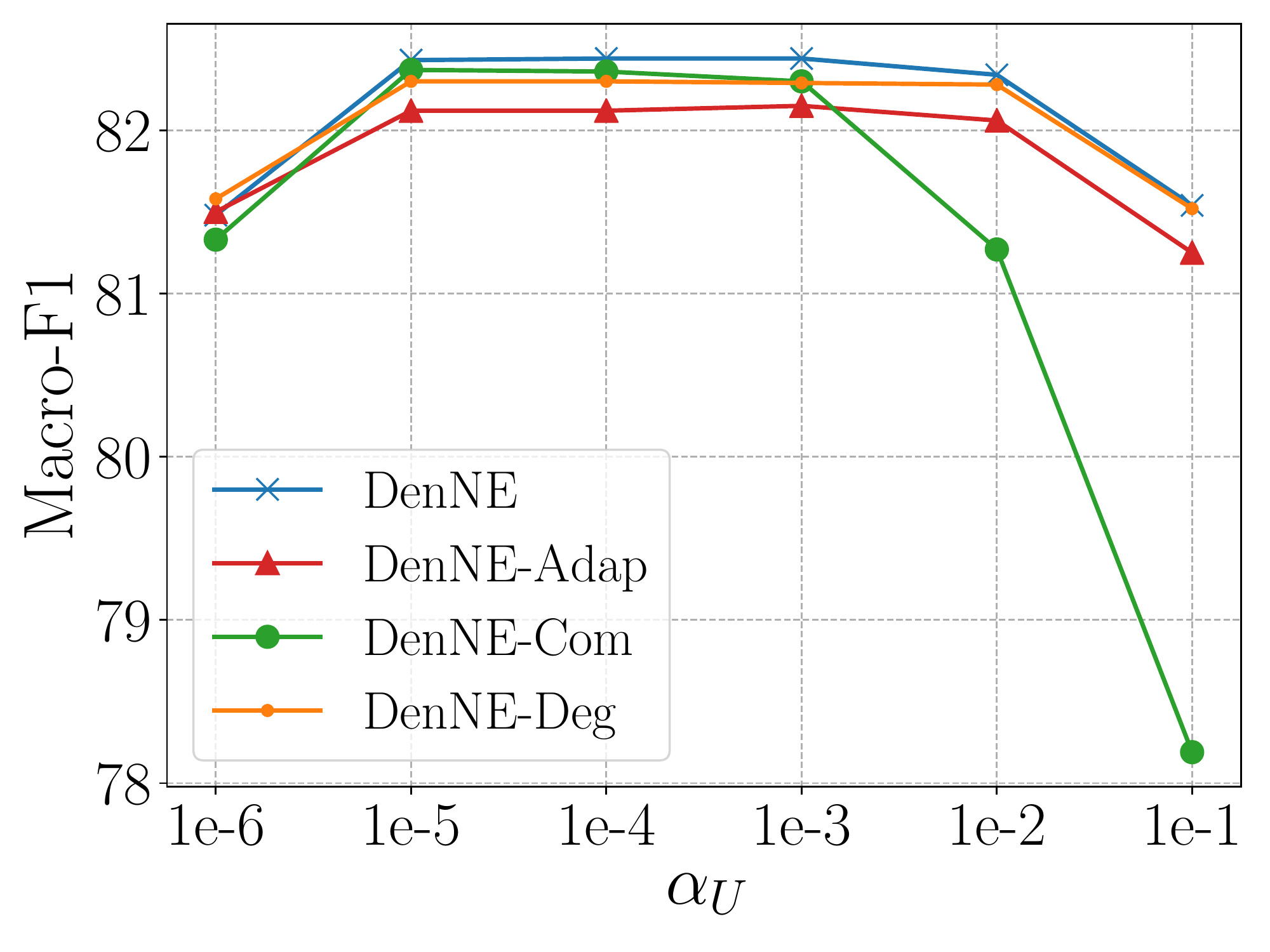} 
            \label{fig:par_embed}
        }
        \subfigure[$\alpha_E$ on Cora]{
            \includegraphics[width=0.23\columnwidth]{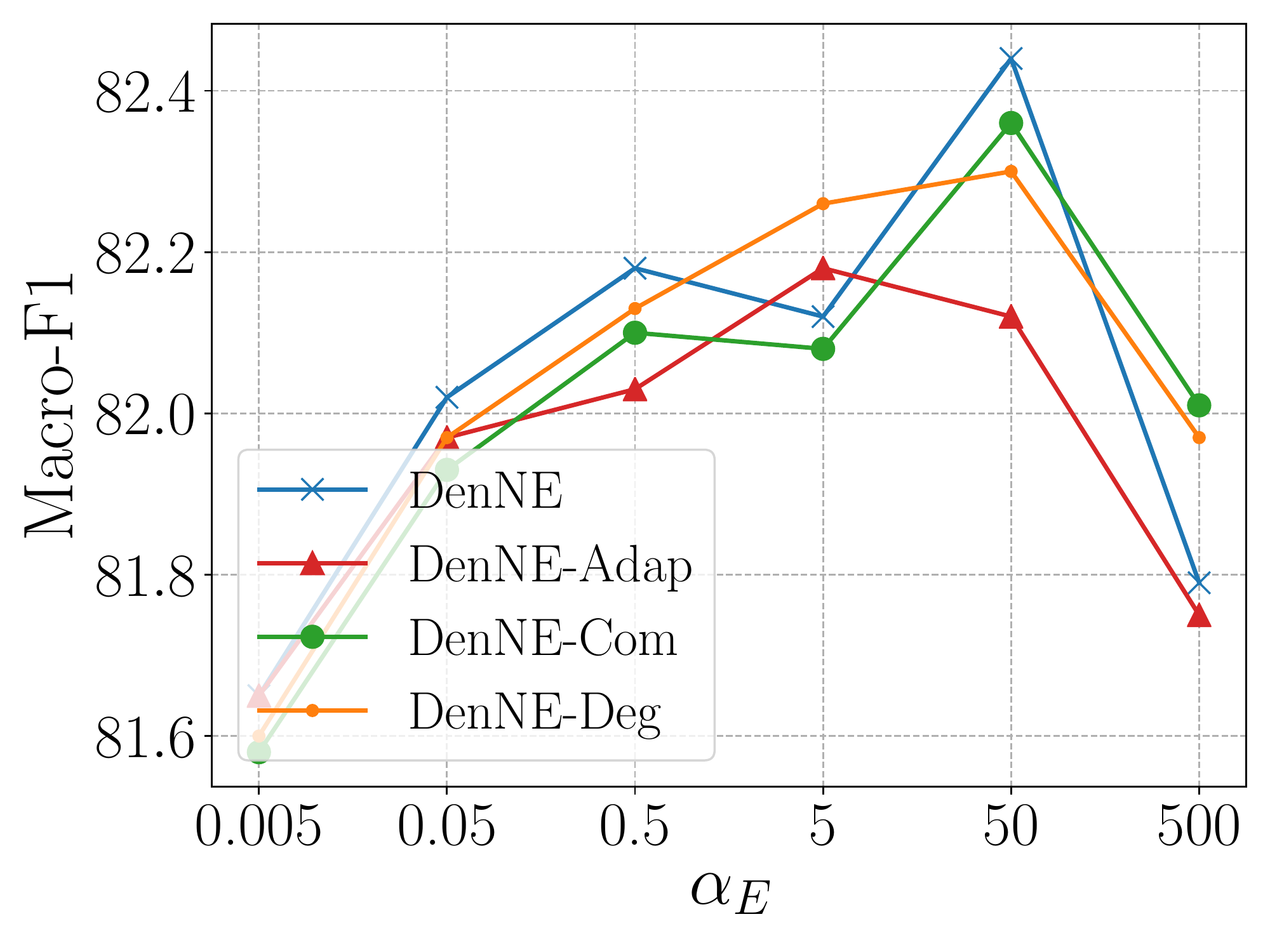} 
            \label{fig:par_noise}
        }
        \subfigure[Prior Knowledge on Geometric]{
            \includegraphics[width=0.23\columnwidth]{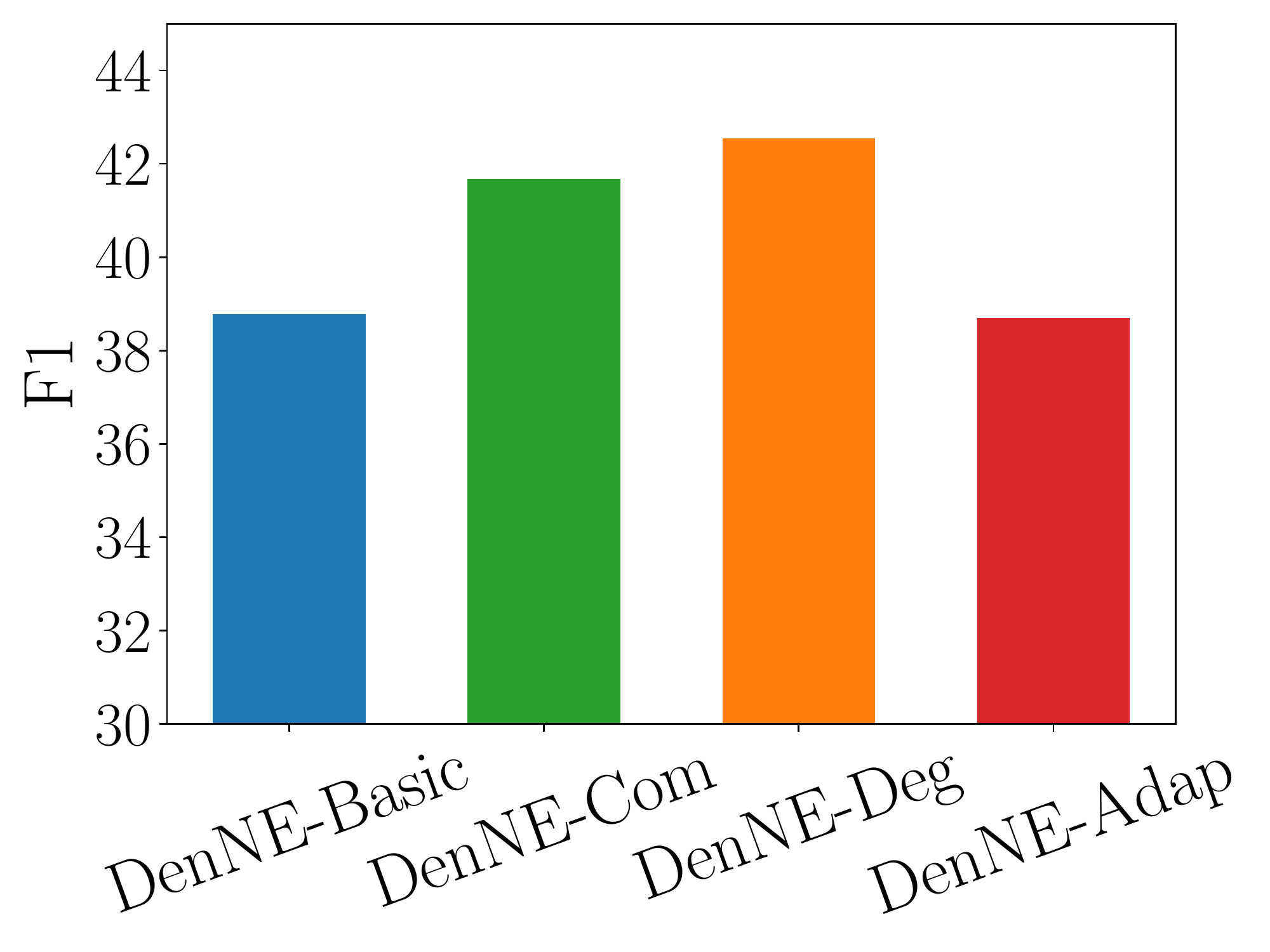} 
            \label{fig:par_geometric}
        }
        \subfigure[Prior Knowledge on Partition]{
            \includegraphics[width=0.23\columnwidth]{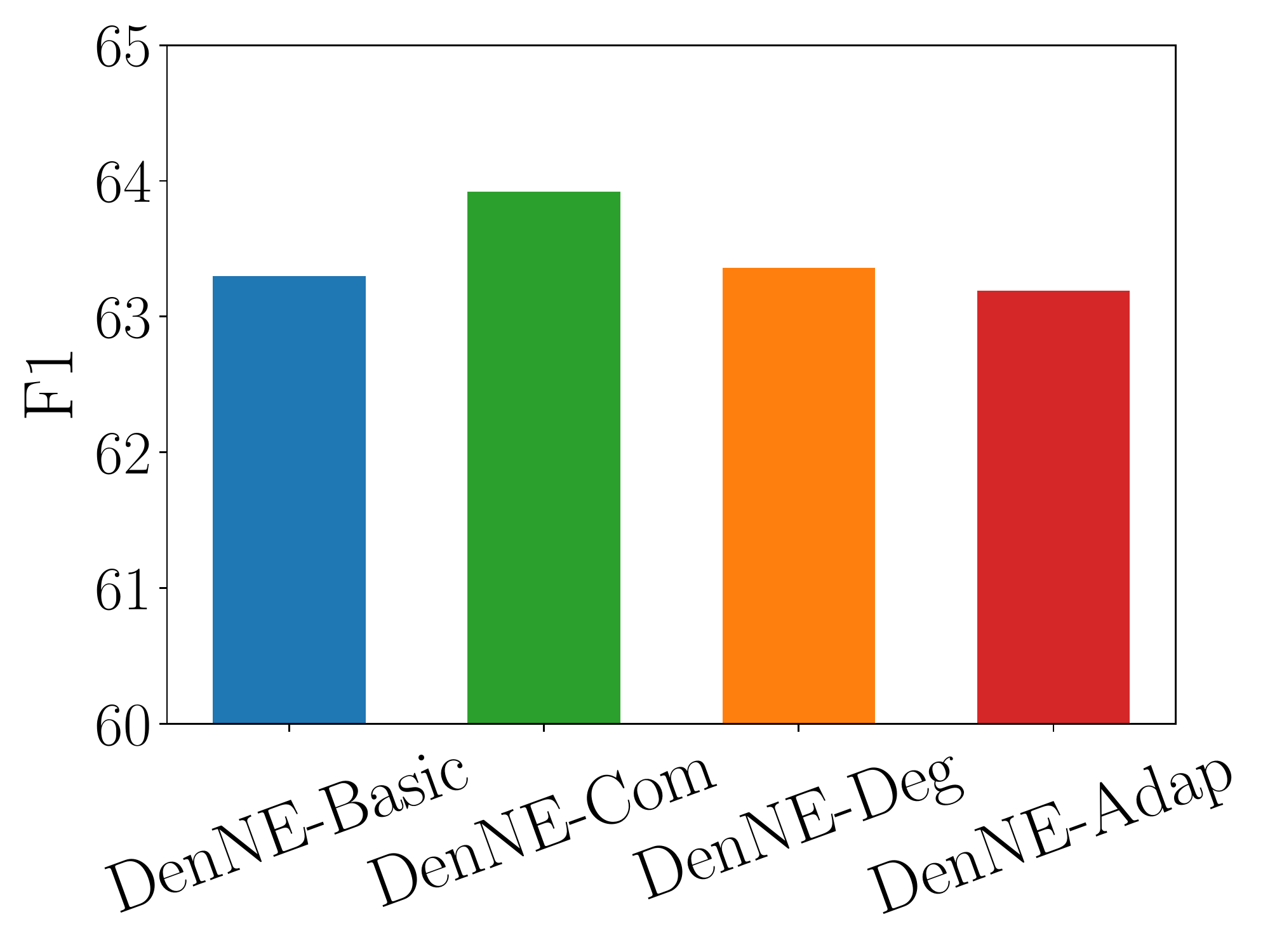} 
            \label{fig:par_partition}
        }
        \caption{Analysis of Parameter Sensitivity.} 
        \label{fig:par}
    \end{minipage}
    \end{figure*}

\subsubsection{Graph Reconstruction}

We reconstruct the original pristine graphs using representations learned from the noisy graphs. Two synthetic graphs generated according to different principles are used as pristine graphs, and a certain ratio of edges are added or removed. Euclidean distance between representations of each node pair are computed as $\|\mathbf{u}_i-\mathbf{u}_j\|_2^2$. Then node pairs are sorted based on distances and the smallest $k$ samples are regarded as positive samples.

F1 score of graph reconstruction when 5\% of edges are added as noises are shown in Figure \ref{fig:gr_geometric} and Figure \ref{fig:gr_partition}. We range the ratio of positive pairs to all pairs from 0.1\% to 1.1\%. Our model has better performance to reconstruct the pristine graphs than both traditional and adversarial embedding models. 
Then we fix the ratio of positive pairs to 1.0\%.
Figure \ref{fig:noise_ratio} shows the result of different ratios of noises added to Partition. The performance of all models become consistently worse if more noises are added, while the results of our model keep better than baselines. 
Figure \ref{fig:noise_type} shows the result of different distributions of noises are added to Geometric, indicating that learning noises adaptively performs better than given a simple prior distribution.

\subsubsection{Parameter Sensitivity}

We first take node classification on Cora as an example to analyze $\alpha_U$ and $\alpha_E$. As shown in Figure \ref{fig:par_embed}, when $\alpha_E$ is fixed to 50.0, the performance of our model is stable unless $\alpha_U$ is too larger or too smaller. Figure \ref{fig:par_noise} shows that when $\alpha_U$ is fixed to 0.0001, the performance will be degraded if $\alpha_E$ is too strong or weak. 
Then the effectiveness of prior knowledge is evaluated on synthetic datasets. On Geometric dataset, the graph have power-law property so it will have much improvement if the degree prior is considered as shown in Figure \ref{fig:par_geometric}. On Partition dataset with obvious community structure, DenNE-Com enjoys the most improvements in Figure \ref{fig:par_partition} .

\section{Conclusion}

We discuss how to learn noise-free node representations on corrupted graphs. In the future, we will expand the model to more general networks such as temporal networks, and to deep models which requires further theoretical analysis.


\bibliography{IEEEtran}
\bibliographystyle{IEEEtran}



%



\end{document}